
\documentclass[letterpaper, 10 pt, conference]{ieeeconf}  

\IEEEoverridecommandlockouts                              
\overrideIEEEmargins

\newcommand{\x}{\bm x}
\newcommand{\barx}{\bar{\bm x}}
\newcommand{\bary}{\bar{\bm y}}
\newcommand{\barg}{\bar{\bm g}}
\newcommand{\Lambdab}{\bm \Lambda}
\newcommand{\xstar}{\bm{x}^*}
\newcommand{\G}{\bm G}
\newcommand{\A}{\bm A}

\newcommand{\M}{\bm M}
\newcommand{\y}{\bm y}
\newcommand{\Y}{\bm Y}
\newcommand{\C}{\bm C}
\newcommand{\R}{\bm R}
\newcommand{\vb}{\bm v}
\newcommand{\ub}{\bm u}

\newcommand{\1}{\bm 1}
\newcommand{\nl}{\left| \left| \left|}
\newcommand{\nr}{\right| \right|\right|}
\newcommand{\nru}{\right| \right|\right|_F^{\ub}}
\newcommand{\nrv}{\right| \right|\right|_F^{\vb^h}}
\newcommand{\diag}{\text{diag}}
\newcommand{\tr}{\text{tr}}



\usepackage{amsmath} 
\usepackage{amssymb}  
\usepackage[fleqn]{mathtools}
\usepackage{bm}
\usepackage{xcolor}
\usepackage{amsthm}
\usepackage{pgfplots}
\usepgfplotslibrary[groupplots]
\theoremstyle{plain}
\newtheorem{prop}{Proposition}  
\newtheorem{lemma}{Lemma}    
\newtheorem{theorem}{Theorem}  
\newtheorem{assum}{Assumption}   
    
\newtheorem{remark}{Remark} 
\usetikzlibrary{spy}
\allowdisplaybreaks

\title{\LARGE \bf
Gradient-Tracking over Directed Graphs for solving Leaderless Multi-Cluster Games
}

\author{Jan Zimmermann$^{1}$, Tatiana Tatarenko$^{1}$, Volker Willert$^{2}$ and J\"urgen Adamy$^{1}$
\thanks{This work was funded by the Deutsche Forschungsgemeinschaft (DFG, German Research Foundation) – SPP 1984, Project 16.}
\thanks{${1}$ Control Methods and Robotics Lab, Department of Electrical Engineering and Information Science,
        Technical University of Darmstadt, 64283 Darmstadt, Germany.
        {Email:\tt\small jan.zimmermann@rmr.tu-darmstadt.de}}%
\thanks{        ${2}$ Faculty of Electrical Engineering, University of Applied Sciences Würzburg-Schweinfurt,  97421 Schweinfurt, Germany.}
}

\begin{document}

\maketitle
\thispagestyle{empty}
\pagestyle{empty}

\begin{abstract}

We are concerned with finding Nash Equilibria in  agent-based multi-cluster games, where agents are separated into distinct clusters. While the agents inside each cluster collaborate to achieve a common goal, the clusters are considered to be virtual players that compete against each other in a non-cooperative game with respect to a coupled cost function. In such scenarios, the inner-cluster problem and the game between the clusters need to be solved simultaneously. Therefore, the resulting inter-cluster Nash Equilibrium should also be a minimizer of the social welfare problem inside the clusters. In this work, this setup is cast as a distributed optimization problem with sparse state information. Hence, critical information, such as the agent's cost functions, remain private. We present  a distributed algorithm that converges with a linear rate to the optimal solution. Furthermore, we apply our algorithm to an extended cournot game to verify our theoretical results.

\end{abstract}

\section{Introduction}
Nash Equilibrium problems arise in a broad variety of applications, such as wireless networks \cite{Charilas2010}, construction engineering and management \cite{Kapliski2010} and Smart Grids \cite{Saad2012}. In the latter, for optimal energy provision, Smart Grids are often represented as hierarchical models, where on a higher level an energy management problem needs to be solved between microgrids, while on a lower level, i.e. inside the microgrids, the distribution of energy or power needs to be optimally planned. The energy management problem is often cast as a non-cooperative game,  where the microgrids compete against each other regarding the energy price of a central power plant \cite{Atzeni13}, \cite{Kasbekar2012}. In contrast to that, lower level problems,  such as the economic dispatch problem \cite{Tatarenko2019distr}, \cite{Zimmermann2020}, are usually formulated as social welfare optimization problems, which depend on the result of the game on the higher level. 
Casting this problem as a multi-cluster game enables simultaneous solution of the non-cooperative game between microgrids and the cooperative distributed optimization inside the microgrids. Such an approach is likely to be more efficient than a separate solution on each level, as the result is optimal regarding both problems.\\
For the distributed  solution of multi-cluster games, most algorithms have been designed for continues time. The works of \cite{Ye2018} and \cite{Ye2017} aim to solve an unconstrained multi-cluster game by using gradient-based algorithms. Inspired by these results, authors of \cite{Ye2020} propose a gradient-free algorithm for a similar setup, where cost functions are unknown to agents and therefore, a payoff-based approach is used. All three publications do not define the inter-cluster communication and no explicit hierarchy between the agents inside a cluster is mentioned. In contrast to that, the following publications define the inter-cluster communication by undirected graphs and introduce a leader-follower hierarchy, in which only the cluster leader communicates with leaders from other clusters. 
In \cite{Yue2018} such leaders and followers exchange pseudo gradients to achieve the Nash Equilibrium of the considered constrained problem. The authors of  \cite{Zeng2019} and \cite{Zou2019} both employ gradient-free algorithms to a achieve a generalized Nash Equilibrium. \\
Less research has been dedicated to discrete-time setups. 
In the work of \cite{Meng2020} a leader-follower based algorithm for discrete-time settings is proposed, which can solve unconstrained multi-cluster games. 
To minimize the cost functions, a gradient-tracking approach is chosen.\\
All of the previous mentioned work deals with undirected communication architectures. \\
In this paper, we provide an algorithm that is based on the gradient-tracking algorithm of \cite{Pu2020} and solves the multi-cluster game. Each agent maintains two variables, one for the decision estimation of all agents and one for the gradient-tracking inside the cluster to which the agent belongs. The step-size of each gradient-step is considered to be constant, which is an advantage of gradient-tracking updates against other methods such as the one in \cite{Nedic2015}. In contrast to most of the mentioned publications, we consider a discrete time setting for our problem. Moreover, we define inter cluster communication, as it is done for example in \cite{Meng2020}.  Furthermore, we go beyond the leader-follower architecture of \cite{Meng2020}: In our approach the inter-cluster communication graph can be defined more generically as it is allowed that more than one agent can communicate with agents outside its own cluster. However, if the graph is defined such that only one agent sends and receives information from other clusters, we arrive at the leader-follower architecture of \cite{Meng2020}. Therefore, the hierarchy setup among agents of \cite{Meng2020} can be regarded as a special case of our approach. Furthermore, opposed to all mentioned literature concerning multi-cluster games, we consider directed communication, which generalize undirected architectures. Our contribution can therefore be summarized as follows: 1) We provide a discrete-time algorithm  that runs on directed, leader-free communication graphs and solves the distributed multi-cluster game. 2) We show convergence by approximation of the update equations of the algorithm with a linear, time-invariant state-space system as it is done in \cite{Meng2020} and \cite{Pu2020}. 3) At last, we verify our theoretical results with a simulation of an extended cournot game.

\section{Notation and Graphs}
All time indices $k$ belong to the set of non-negative integers $\mathbb{Z}^+$. Scalars are denoted by $x$, while we use boldface for vectors $\x$ and matrices $\A$. The expression $(x_i)_{i=1}^{n}$ vectorizes all $x_i$, i.e. $[x_1, ..., x_n]^T$. The same way, $(\x_i)_{i=1}^{n}$ stacks vectors to matrices.  We denote vector norms by $|| \cdot ||$ and matrix norms by $||| \cdot |||$. Agent $i$ is part of the considered agent system, consisting of $n$ agents. All agents are grouped  into $h = 1, ..., H$ clusters, where a cluster $h$ encompasses $n_h$ agents. The operator $\diag(\cdot)$ expands the vector $\x$ into a diagonal matrix with entries of $\x$ on its trace and $\diag\lbrace \A_1, ..., \A_n \rbrace$  expands a series of matrices $\A_i \in \mathbb{R}^{n_i \times q_i}$ into a block matrix with matrices $\A_1, ... ,\A_n$ on its diagonal as blocks and zero entries otherwise, such  that the resulting matrix is of dimension $\sum_{i} n_i\times \sum_i q_i$. For brevity of notation we use the following notation for gradients: $\nabla_{\x} f(\x,\y) \big|_{\x = \x_r} \triangleq \nabla_{\x_r} f(\x_r,\y)$.\\
A directed graph $\mathcal{G} = \lbrace \mathcal{V}, \mathcal{E} \rbrace$ contains a set of vertices $ \mathcal{V}$ and a set of edges $\mathcal{E} = \mathcal{V} \times \mathcal{V}$. Each vertex $v_i \in \mathcal{V}$ of a graph $\mathcal{G}$ is represented by an agent $i$ of the agent system and each edge $(j,i) \in \mathcal{E}$ is a directed communication channel from agent $j$ to agent $i$.
\section{Preliminaries}
\subsection{Communication}
In this section, we present the assumptions that we make towards the communication architecture of the agent system, which is described by graph theory. There are two separate communication layers: Layer one, which connects the agents inside the respective cluster but allows no communication to other clusters and layer two, which connects agents regardless of their cluster membership. For these, we make the assumptions that
\begin{assum} \label{as:graphC}
	The directed graph $\mathcal{G}^h$, which connects the agents inside cluster $h$, can be described by the weighted adjacency matrices $\C^h \in \mathbb{R}^{n_h \times n_h}$. It holds that
	\begin{itemize}
		\item all graphs $\mathcal{G}^h$, $h = 1, ..., H,$ are strongly connected, respectively, and
		\item the weights of each $\C^h$ are chosen such that the matrix is non-negative, column-stochastic, i.e. $\1^T \C^h = \1^T$, and has positive diagonal entries $\C^h_{ii}$. 
	\end{itemize}
\end{assum}
\begin{assum} \label{as:graphR}
	The directed graph $\mathcal{G}$, which models connections both inside the clusters as in-between, can be described by the adjacency matrix $\R \in \mathbb{R}^{n \times n}$. It holds that
	\begin{itemize}
		\item the graph is strongly connected and
		\item the weights of $\R$ are chosen such that the matrix is non-negative, row-stochastic, i.e. $\R \1  = \1$, and has positive diagonal entries $\R_{ii}$. 
	\end{itemize}
\end{assum}
\begin{remark}
	For column-stochastic weighting see for example Remark 2 of \cite{Zimmermann2020}. For row-stochastic weighting use $\R_{ij} = 1/ \delta_j^+$, $i = 1, ..., n$, where  $\delta_j^+$ is the in-degree of node $j$. The in-degree can be determined by simple message forwarding.
	
\end{remark}
Some properties of the eigenvectors of the communication matrices can be summarized  in the following Lemma from \cite{Pu2020}:
\begin{lemma} \label{lemma:RCeigen}
	Let matrices $\C^h$ and $\R$ be defined as in Assumptions \ref{as:graphC} and \ref{as:graphR}, respectively. Let all statements of these assumptions hold. Then,
	\begin{itemize} 
		\item the matrices $\C^h$, $h = 1,..., H$, each have a unique, positive right eigenvector $\vb^h$ with regard to eigenvalue $1$, normed such that  $\1^T \vb^h = 1$, i.e., it holds that $ \C^h \vb^h = \vb^h$,
		\item  the matrix $\R$ has a unique, positive left eigenvector $\ub$, normed such that $\ub^T \1 = 1$, i.e., it holds that $\ub^T \R = \ub^T$. 
	\end{itemize}
\end{lemma}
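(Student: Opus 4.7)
The plan is to derive both statements as direct consequences of the Perron--Frobenius theorem applied to irreducible non-negative matrices, combined with the stochasticity assumptions. First I would observe that under Assumption~\ref{as:graphC}, strong connectivity of $\mathcal{G}^h$ is equivalent to irreducibility of the non-negative matrix $\C^h$, and the requirement of positive diagonal entries implies that $\C^h$ is in fact primitive (aperiodic). The analogous facts hold for $\R$ under Assumption~\ref{as:graphR}.

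Next I would identify the spectral radius. Since $\C^h$ is column-stochastic, $\1^T \C^h = \1^T$ shows that $1$ is an eigenvalue of $(\C^h)^T$, hence also of $\C^h$. Because $\C^h$ is non-negative and column-stochastic, every induced matrix norm by the $\ell_1$ vector norm equals $1$, so the spectral radius $\rho(\C^h) \le 1$, and together with the previous observation we conclude $\rho(\C^h) = 1$. Symmetrically, $\R \1 = \1$ combined with row-stochasticity gives $\rho(\R) = 1$.

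Then I would invoke the Perron--Frobenius theorem for irreducible non-negative matrices: the spectral radius is a simple eigenvalue, and there exists a strictly positive right eigenvector associated with it, unique up to scaling. Applied to $\C^h$, this yields a unique (up to scaling) strictly positive right eigenvector satisfying $\C^h \vb^h = \vb^h$; the normalization $\1^T \vb^h = 1$ then fixes $\vb^h$ uniquely, proving the first claim. Applied to $\R^T$ (which is irreducible iff $\R$ is), it yields a unique positive vector $\ub$ with $\R^T \ub = \ub$, i.e.\ $\ub^T \R = \ub^T$, and the normalization $\ub^T \1 = 1$ fixes $\ub$, proving the second claim.

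I do not expect any serious obstacle here: the result is essentially the standard characterization of stationary distributions of irreducible Markov chains, and the two statements differ only in whether one works with the matrix or its transpose. The only minor technical care needed is to argue simplicity of the eigenvalue~$1$ (so that the eigenvector is unique up to scaling rather than merely existing), which follows from irreducibility; the positive-diagonal assumption, while stronger than necessary for this particular lemma, is what one would normally cite to upgrade irreducibility to primitivity should the proof elsewhere need $\lim_{k\to\infty}(\C^h)^k$ to exist. Since the excerpt attributes the result to~\cite{Pu2020}, a one-line citation followed by the above outline would be acceptable.
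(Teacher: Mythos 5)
Your argument is correct: the paper itself gives no proof of this lemma and simply cites \cite{Pu2020}, and the standard Perron--Frobenius reasoning you outline (strong connectivity $\Rightarrow$ irreducibility, stochasticity $\Rightarrow$ $\rho=1$, simplicity of the Perron root $\Rightarrow$ uniqueness of the positive eigenvector after normalization) is exactly the argument that reference relies on. Your remark that the positive-diagonal assumption is not needed here but yields primitivity for use elsewhere is also accurate.
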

\subsection{Matrix norms} \label{subsec:matrixnorms}
In the theoretical part of this work, we will need definitions for weighted spectral matrix norms, for which we take orientation from \cite{Xin2019} and weighted Frobenius matrix norms, for which our results are loosely based on \cite{Meng2020}. \\
We define the weighted spectral matrix norms for arbitrary quadratic matrices $\bm X \in \mathbb{R}^{r \times r}$ as follows:
\begin{align*}
||| \bm X |||_2^{\ub}  &\triangleq ||| \diag(\sqrt{\ub}) \bm X \diag(\sqrt{\ub})^{-1}|||_2, \\
||| \bm X |||_2^{\vb^h}  &\triangleq ||| \diag(\sqrt{\vb^h})^{-1} \bm X \diag(\sqrt{\vb^h})|||_2, 
\end{align*}
 using the left eigenvector $\ub$ from matrix $\R$, defined in Assumption \ref{as:graphR}, and right eigenvector $\vb^h$ from matrix $\C^h$, defined in Assumption \ref{as:graphC}. 
Note, that these definitions correspond to the norms in equation (4) and (5) of \cite{Xin2019}. \\
The Frobenius inner product for real, rectangular matrices $\bm A, \bm B \in \mathbb{R}^{r \times s}$ is defined as $\langle \bm \A, \bm B \rangle_{F} = \tr(\bm B^T \bm A) $, see \cite{Horn}. 
The weighted Frobenius inner products $\langle \bm \A, \bm B \rangle_{F}^{\ub} = \tr(\bm B^T \diag(\ub) \bm A), \langle \bm \A, \bm B \rangle_{F}^{\vb^h} = \tr(\bm B^T \diag(\vb^h)^{-1} \bm A)$
induce the weighted Frobenius matrix norms
\begin{align}
||| \A |||_F^{\ub} & \triangleq ||| \diag(\sqrt{\ub})\A|||_F, \\
||| \A |||_F^{\vb^h} & \triangleq  ||| \diag(\sqrt{\vb^h})^{-1}\A|||_F.
\end{align}
Based on the equivalence of norms, it can be established that there exist constants $\delta_{u,F}, \delta_{v, F}, \delta_{F, v}, \delta_{F, u} > 0$ such that
\begin{align*}
&||| \cdot |||^{\ub}_F \leq \delta_{u,F} ||| \cdot |||_F, &  ||| \cdot |||^{\vb^h}_F \leq \delta_{v,F} ||| \cdot |||_F, \\
&||| \cdot |||_{F} \leq \delta_{F,u} ||| \cdot |||_F^{\ub}, &  ||| \cdot |||_{F} \leq \delta_{F,v} ||| \cdot |||_F^{\vb^h}.
\end{align*}
Creating such upper bounds is standard in relevant literature \cite{Meng2020},  \cite{Pu2020},  \cite{Xin2019}.\\
Concerning the standard Frobenius norm of a matrix product, the following upper bound can be provided using the spectral matrix norm: 
\begin{lemma}\label{lemma:submult}
	For arbitrary matrices $\bm A \in \mathbb{R}^{n \times n}$, $\bm B \in \mathbb{R}^{n \times q}$, it holds that $\nl \bm A \bm B \nr_F \leq \nl \bm A \nr_2 \nl  \bm B \nr_F$.
\end{lemma}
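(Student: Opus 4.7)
The plan is to reduce the matrix inequality to a column-by-column Euclidean estimate, because the Frobenius norm is just the Euclidean norm of the stacked columns while the spectral norm is, by definition, the operator norm on Euclidean vectors. That way, the whole argument boils down to applying the defining bound of the spectral norm once per column of $\bm B$.

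First I would write $\bm B$ in terms of its columns $\bm B = [\bm b_1, \bm b_2, \ldots, \bm b_q]$ with $\bm b_j \in \mathbb{R}^{n}$, so that $\bm A \bm B = [\bm A \bm b_1, \ldots, \bm A \bm b_q]$. Using the definition of the Frobenius norm as the square root of the sum of squared entries, I would record the identity
\begin{equation*}
\nl \bm A \bm B \nr_F^2 \;=\; \sum_{j=1}^{q} \|\bm A \bm b_j\|^2, \qquad \nl \bm B \nr_F^2 \;=\; \sum_{j=1}^{q} \|\bm b_j\|^2.
\end{equation*}
Next I would invoke the defining property of the spectral norm, $\|\bm A \bm v\| \leq \nl \bm A \nr_2 \|\bm v\|$ for every $\bm v \in \mathbb{R}^{n}$, applied to each column $\bm b_j$, and square to get $\|\bm A \bm b_j\|^2 \leq \nl \bm A \nr_2^2 \|\bm b_j\|^2$.

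Summing the column-wise bounds over $j = 1, \ldots, q$ and substituting the two identities above gives $\nl \bm A \bm B \nr_F^2 \leq \nl \bm A \nr_2^2 \nl \bm B \nr_F^2$, and taking square roots yields the claim. There is no real obstacle here: the argument is completely elementary, and the only thing to be mindful of is keeping the notation consistent with the paper's convention of triple bars for matrix norms and double bars for the Euclidean vector norm.
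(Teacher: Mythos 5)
Your proof is correct and complete. Note that the paper does not actually supply a proof of this lemma: it only remarks that the result ``can be proved using the submultiplicative property of the Frobenius norm'' and skips the details. Your column-wise argument --- writing $\bm B = [\bm b_1,\ldots,\bm b_q]$, applying the operator-norm bound $\|\bm A \bm b_j\| \leq \nl \bm A \nr_2 \|\bm b_j\|$ to each column, and summing squares --- is the standard and, in fact, the appropriate route here: plain submultiplicativity of the Frobenius norm alone would only yield the weaker estimate $\nl \bm A \bm B \nr_F \leq \nl \bm A \nr_F \nl \bm B \nr_F$, whereas the mixed spectral--Frobenius bound claimed in the lemma (which is what the later estimates in the appendix actually use, e.g.\ with $\nl \R \nr_2 = 1$) requires exactly the column decomposition you give. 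So your proposal fills a gap the paper leaves open, and does so with the sharper and correct argument.
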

This Lemma can be proved using the submultiplicative property of the Frobenius norm. We skip the mathematical details for brevity. \\
The matrix norms defined above can now be used in the following Lemma
\begin{lemma}\label{lemma:sigma}
	Let Assumptions \ref{as:graphC} and \ref{as:graphR}  hold and matrices $\R$ and $\C^h$ be defined as therein. The vectors $\ub$ and $\vb^h$ are the respective eigenvectors. Then, for arbitrary $\x \in \mathbb{R}^{n \times q}$ and $\y \in \mathbb{R}^{n_h \times q_h}$, there exist positive constants $\sigma_R, \sigma_C < 1$ such that
	\begin{align}
	\nl \R \x - \1 \ub^T \x\nr_F^{\ub} &\leq  \sigma_R \nl \x - \1 \ub^T \x\nr_F^{\ub}, \label{eq:lemmasigmaR} \\
	\nl \C^h \y -  \vb^h \1^T \y\nr_F^{\vb^h} &\leq  \sigma_C \nl \y - \vb^h \1^T  \y\nr_F^{\vb^h}. \label{eq:lemmasigmaC}
	\end{align}
\end{lemma}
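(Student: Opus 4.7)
The plan is to rewrite the left-hand sides of (\ref{eq:lemmasigmaR}) and (\ref{eq:lemmasigmaC}) as matrix products of the form (rank-one corrected mixing matrix)$\,\cdot\,$(centred argument), apply the submultiplicative bound of Lemma \ref{lemma:submult} in the corresponding weighted norm, and then argue that the weighted spectral norm of the rank-one corrected mixing matrix is strictly less than one by invoking Perron--Frobenius.

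For (\ref{eq:lemmasigmaR}) I would first exploit $\R\1 = \1$ together with $\ub^T\R = \ub^T$ and $\ub^T\1 = 1$ (Lemma \ref{lemma:RCeigen}) to verify the projector identity
\begin{equation*}
(\R - \1\ub^T)(\bm{I} - \1\ub^T) = \R - \1\ub^T,
\end{equation*}
from which, after multiplication on the right by $\x$, the factorization $\R\x - \1\ub^T\x = (\R - \1\ub^T)(\x - \1\ub^T\x)$ follows. Inserting $\diag(\sqrt{\ub})\diag(\sqrt{\ub})^{-1}$ between the two factors and applying Lemma \ref{lemma:submult} then yields
\begin{equation*}
\nl \R\x - \1\ub^T\x \nr_F^{\ub} \leq \nl \R - \1\ub^T \nr_2^{\ub}\,\nl \x - \1\ub^T\x \nr_F^{\ub},
\end{equation*}
so (\ref{eq:lemmasigmaR}) reduces to the claim $\sigma_R \triangleq \nl \R - \1\ub^T \nr_2^{\ub} < 1$. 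Replacing the role of $\R, \ub, \1\ub^T$ by $\C^h, \vb^h, \vb^h\1^T$ and using column-stochasticity $\1^T\C^h = \1^T$ together with $\C^h\vb^h = \vb^h$ and $\1^T\vb^h = 1$, the same manipulation gives $\C^h\y - \vb^h\1^T\y = (\C^h - \vb^h\1^T)(\y - \vb^h\1^T\y)$, and hence (\ref{eq:lemmasigmaC}) with $\sigma_C \triangleq \max_{h=1,\dots,H}\nl \C^h - \vb^h\1^T \nr_2^{\vb^h}$, conditional on $\sigma_C<1$.

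The main obstacle is establishing the strict inequalities $\sigma_R,\sigma_C<1$. Here I would invoke the Perron--Frobenius theorem: since $\R$ is non-negative, row-stochastic, has positive diagonal, and its graph is strongly connected (Assumption \ref{as:graphR}), the eigenvalue $1$ of $\R$ is simple and all remaining eigenvalues lie strictly inside the open unit disk. The rank-one matrix $\1\ub^T$ is precisely the associated spectral projector, so the spectrum of $\R-\1\ub^T$ is that of $\R$ with the eigenvalue $1$ replaced by $0$; in particular $\rho(\R - \1\ub^T)<1$. The similarity $\bm X \mapsto \diag(\sqrt{\ub})\bm X\diag(\sqrt{\ub})^{-1}$ preserves the spectrum, and the standard refinement used in \cite{Xin2019} and \cite{Pu2020} guarantees that the corresponding weighted spectral norm can be chosen within an arbitrarily small margin of the spectral radius, hence strictly below one. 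The analogous argument applied clusterwise to each $\C^h$ (strongly connected by Assumption \ref{as:graphC}), followed by a maximum over the finitely many clusters, yields $\sigma_C<1$ and closes the argument.
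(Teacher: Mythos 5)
The paper does not actually supply a proof of this lemma---it defers to Lemma~4 of the cited push--pull reference and ``skips the details''---so your proposal has to stand on its own. The first two thirds do: the projector identity $(\R-\1\ub^T)(\bm I - \1\ub^T)=\R-\1\ub^T$ (and its column-stochastic analogue) is verified correctly, the factorization $\R\x-\1\ub^T\x=(\R-\1\ub^T)(\x-\1\ub^T\x)$ follows, and the submultiplicative step via Lemma~\ref{lemma:submult} legitimately reduces everything to the single claim $\nl \R-\1\ub^T\nr_2^{\ub}<1$ (resp.\ $\nl \C^h-\vb^h\1^T\nr_2^{\vb^h}<1$). This is the standard route in this literature.

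The gap is in how you establish that final claim. Perron--Frobenius does give $\rho(\R-\1\ub^T)<1$ (primitivity from the positive diagonal is needed, and you correctly invoke it), and Horn--Johnson does guarantee that for every $\varepsilon>0$ \emph{some} matrix norm satisfies $\|A\|\le\rho(A)+\varepsilon$. But that norm is built from a similarity adapted to $A$; it is not the \emph{fixed} norm $\nl\cdot\nr_2^{\ub}$, whose value is the largest singular value of $\diag(\sqrt{\ub})(\R-\1\ub^T)\diag(\sqrt{\ub})^{-1}$. For a non-normal matrix a prescribed diagonal similarity does not in general pull the induced $2$-norm down near the spectral radius, so ``$\rho<1$ plus the existence of a good norm'' does not yield $\sigma_R<1$ for \emph{this} norm. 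The statement is nevertheless true, but the proof must use the stochastic structure rather than only the spectrum: for the row-stochastic case one checks that $\diag(\ub)-\R^T\diag(\ub)\R$ is symmetric with zero row sums and nonpositive off-diagonal entries (using $\ub^T\R=\ub^T$ and $\R\1=\1$), i.e.\ a weighted graph Laplacian of a connected undirected graph (strong connectivity and the positive diagonal of $\R$ enter exactly here), hence positive semidefinite with kernel $\mathrm{span}\{\1\}$; restricting to the subspace $\{\,\bm z:\ub^T\bm z=0\,\}$ and using compactness of its unit sphere gives a strict contraction factor $\sigma_R<1$, and a mirrored argument (or passing to $(\C^h)^T$) handles $\sigma_C$. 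Alternatively you could adopt the formulation of the push--pull paper, where the norms are existentially quantified rather than fixed in advance---but then you would be proving a different statement from the lemma as written, which commits to the specific $\ub$- and $\vb^h$-weighted norms.
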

This Lemma is an adjusted version of Lemma 4 of \cite{Pu2020}. Again, the details of the proof are skipped for brevity. 
%

\section{Multi-cluster games and gradient-tracking}
\subsection{Problem formulation}
Assume an agent system consisting of $n$ agents, in which each agent has communication and computation abilities. The storage capacity of these agents is limited. The agents are grouped into $H$ clusters and $n_h$ agents belong to cluster $h$. The set  $\mathcal{A}_h$ contains the agents of cluster $h$. All of these sets for clusters $h = 1, ..., H$ are disjoint, $\mathcal{A}_i  \cap \mathcal{A}_j = \emptyset$ for $i \neq j$.  The agents are connected by two different communication graphs. The graph $\mathcal{G}^h$ connects only the agents inside cluster $h$, while the graph $\mathcal{G}$ connects agents independently of their cluster membership. Therefore, $\mathcal{G}^h$ restricts the communication to intra-cluster exchange of information, while $\mathcal{G}$ enables global messaging.\\
It is assumed that each cluster forms a coalition, which means that all agents inside a cluster collaborate to achieve a common goal. In contrast to this, the clusters compete against each other regarding some coupled cost function. This inter-cluster competition can be modeled as a non-cooperative game, where the clusters are regarded as virtual players, while the actual decisions and actions are determined by the agents inside the clusters. Mathematically, we model this setup as follows.\\
Agent $i$ of cluster $h$ has exclusive access to its personal cost function $f_i^h(\x)$, which is assumed to be convex. This cost function is only known by agent $i$ and unknown to every other agent, independently of cluster membership. Vector $\x \in \mathbb{R}^q$ is the shared decision vector that can be separated into decisions of cluster $h$, i.e. $\x^h \in \mathbb{R}^{q_h}$, and the decisions of all other clusters, which are denoted by $\x^{-h} \in \mathbb{R}^{q - q_h}$. The cluster cost function $F^h:\mathbb{R}^q \rightarrow \mathbb{R}$ of cluster $h$ is declared as
\begin{equation}\label{eq:gameformulation}
	F^h(\x^h, \x^{-h}) = \sum_{i = 1}^{n_h} f_i^h(\x^h, \x^{-h}).
\end{equation}
Each cluster aims to minimize this function $F^h$, which depends not only on the actions $\x^h$ of cluster $h$ but also on the actions of all other clusters. However, in the optimization process, agents can only adjust the decisions of their own cluster while observing $\x^{-h}$.\\
With all considerations from above, we can express the multi-cluster game $\Gamma(H, \mathbb{R}^q, \lbrace F^h\rbrace)$, with the clusters as virtual players, as the following optimization problem \footnote{Note that if the number of clusters is reduced to $H = 1$, we arrive at the standard definition of an unconstrained, distributed optimization problem as described in \cite{Nedic2015} or \cite{Pu2020}.} :
\begin{equation}\label{eq:gameproblem}
	\min_{\x^h \in \mathbb{R}^{q_h}} F^h(\x^h, \x^{-h}) = \min_{\x^h \in \mathbb{R}^{q_h}} \sum_{i=1}^{n_h} f_i^h(\x^h, \x^{-h}),
\end{equation} 
$\forall h = 1, ..., H.$ In order to evaluate the gradient of the local cost function, an estimation of other cluster's decisions needs to be present. Therefore, every agent $i$ maintains a vector $\x_i$ that estimates the decisions of all clusters in the network.
In order for a solution $\xstar$ to be an optimum of the defined game, the following conditions need to be fulfilled:
\begin{itemize}
	\item Consensus among agents concerning the local state estimations: \vspace{-0.3cm} \begin{equation}
		\x_i = \x_j = \xstar, \ i,j = 1, ..., n,  \label{eq:consensuscondition}
	\end{equation}
	\item Social welfare minimum inside all clusters $h= 1, ..., H$ for sum of convex functions: \begin{equation}
	\sum_{i=1}^{n_h} \nabla_{\x^h} f_i^h((\x^h)^*, (\x^{-h})^* ) = 0 \label{eq:socialwelfarecondition}
	\end{equation}

	\item Nash Equilibrium for game $\Gamma(H, \mathbb{R}^q, \lbrace F^h\rbrace)$ between clusters:
	\begin{equation}
	F^h\left((\x^h)^*, (\x^{-h})^*\right) \leq F^h\left(\x^h, (\x^{-h})^*\right), \forall h \label{eq:nasheqcondition}.
	\end{equation}
\end{itemize}
The consensus condition is necessary, because the final decision vectors need to be the same at every agent. Every estimation should converge to the optimal decision $\xstar$ that satisfies the social welfare and Nash Equilibrium conditions.\\
Before we describe our algorithm, which solves the formulated problem, we first make assumptions regarding the local cost functions and their gradients to further specify the class of problems that we consider.
\begin{assum}\label{as:lipschitz}
	All local cost functions $f_i^h(\x^h, \x^{-h})$, for all $i = 1, ..., n_h$ and $h = 1, ..., H$, are convex,  continuously differentiable  and the gradient $\nabla_{\x^h} f_i^h(\x^h, \x^{-h})$ is Lipschitz continuous on $\mathbb{R}^{q_h}$, i.e. there exist constants $L_i^h > 0$ such that
	\begin{align*}
		||\nabla_{\x^h} f_i^h(\x^h, \x^{-h})  - \nabla_{\tilde{\x}^h} f_i^h(\tilde{\x}^h, \tilde{\x}^{-h}) ||_2& \\
		\leq L_i^h || \x^h - \tilde{\x}^h ||_2   \leq L_i^h || \x - \tilde{\x} ||_2.&
	\end{align*}
	Furthermore, it can be assumed that there exists a constant $L > 0$ such that
	$L \geq L_i^h, \ \forall i, h$.
\end{assum}
This assumption is standard in gradient-based distributed optimization \cite{Nedic2015}, \cite{Pu2020}. Next, we define	$\bm g^h(\x^h, \x^{-h}) \triangleq \sum_{i = 1}^{n_h} \nabla_{\x^h}f_i^h(\x^h, \x^{-h}) \in \mathbb{R}^{q_h}$
and the game mapping $ \bm M: \mathbb{R}^q \rightarrow \mathbb{R}^q$
\begin{equation} \label{eq:gamemapping}
	\M(\x) = [\bm g^1(\x)^T, ..., \bm g^H(\x)^T]^T.
\end{equation}
We make the assumption that
\begin{assum}\label{as:strongmonotonegame}
	The game mapping $M(\x)$ is strongly monotone on $\mathbb{R}^q$ with constant $\mu > 0$. 
\end{assum}
\begin{remark}
	This assumption is necessary for uniqueness of the Nash Equilibrium of game $\Gamma(H, \mathbb{R}^q, \lbrace F^h\rbrace)$.  Note that with this assumption it holds $\forall \x, \y \in \mathbb{R}^q$ that
	\begin{align*}
		\sum_{h = 1}^H  \left[\left( \sum_{i = 1}^{n_h} \left(\nabla_{\x^h} f_i^h(\x) - \nabla_{\y^h}f_i^h(\y)\right)\right)^T(\x^h - \y^h) \right]\\
		\geq \mu \sum_{h = 1}^H ||\x^h - \y^h||_2^2 = \mu ||\x - \y||_2^2.
	\end{align*}
\end{remark}

\subsection{Algorithm}
Let $k = 1, 2, ...$ be the time index. At each instance $k$ vector $\x_i(k)$ contains agent $i$'s estimations of the decisions of all clusters and therefore takes the form 
\begin{equation*}
	\x_i(k) = \left[ (\x_i^1(k))^T, ..., (\x_i^h(t))^T, ..., (\x_i^H(k))^T \right]^T \in \mathbb{R}^q,
\end{equation*}
where $\x_i^h(k) \in \mathbb{R}^{q_h}$ is the estimation made by agent $i$ of cluster $h$'s decisions at time $k$. The estimation vectors of all $n$ agents in the system can then be stacked to receive a matrix of all estimations at time $k$
\begin{equation*}
	\x(k) = [\x_1(k), ..., \x_N(k)]^T \in \mathbb{R}^{n \times q}. 
\end{equation*}
Without loss of generality, we assume that the row sequence of $\x(k)$ is ordered according to the numbering of cluster $h= 1, ..., H$. This means that the first $n_1$ rows of $\x(k)$ are estimations of agents that belong to cluster $1$, the next $n_2$ rows are estimations of agents belonging to cluster $2$ and so on.\\
Furthermore, we introduce the variable $\y_i^h(k) \in \mathbb{R}^{q_h}$, which is agent $i$'s local tracking variable of the gradient in cluster $h$ at time $k$. Assuming that $i \in \mathcal{A}_h$, we define the vector 
\begin{equation*}
	\hat{\y}_i^h(k) = [\bm 0_{1\times n_{<h}}, (\y_i^h(k))^T, \bm 0_{1\times n_{>h}}]^T \in \mathbb{R}^q,
\end{equation*}
with $n_{<h}\sum_{l = 1}^{h-1} n_l$ and $n_{>h}\sum_{l = h+1}^{H} n_l$. We  stack all local tracking variables of cluster $h$
\begin{equation*}
	\y^h(k) = [\y^h_1(k), ..., \y^h_{n_h}(k)]^T \in \mathbb{R}^{n_h \times q_h}
\end{equation*}
and then include all tracking variables of the separate clusters in the block matrix
\begin{equation*}
	\Y(k) = \diag\lbrace\y^1(k), ..., \y^H(k)\rbrace \in \mathbb{R}^{n \times q}.
\end{equation*}
It is important to initialize all local tracking variables $\y_i^h(0)$ with the local gradient at starting estimation $\x_i(0)$, i.e.
\begin{equation} \label{eq:yhinit}
	\y_i^h(0) = \nabla_{\x_i^h} f_i^h(\x_i(0)).
\end{equation}
At last, we define matrix $\G^h(k) \in \mathbb{R}^{n_h \times q_h}$, which contains the gradients of cluster $h$ at time $k$:
\begin{equation*}
	\G^h(k) = [\nabla_{\x_1^h} f_1^h(\x_1(k)), ..., \nabla_{\x_{n_h}^h} f_{n_h}^h(\x_{n_h}(k))]^T.
\end{equation*}
With all above definitions, we are able to formalize our algorithm with agent-wise update equations as follows:
\newpage
\begin{subequations}\label{eq:algorithmagent}
	\begin{align}
		\x_i(k+1) &= \sum_{j=1}^n \R_{ij}(\x_i(k) - \alpha \hat{\y}_i^h(k)) , \\
		\y_i^h(k+1) &= \sum_{j=1}^{n_h} \C^h_{ij} \y_j^h(k) \\
		&+ 	\nabla_{\x_i^h} f_i^h(\x_i^h(k+1), \x_i^{-h}(k+1) ) \nonumber \\
		&-\nabla_{\x_i^h} f_i^h(\x_i^h(k), \x_i^{-h}(k)) \nonumber,
	\end{align}
\end{subequations}
where $\alpha$ is a positive, constant step-size. \\
Using the stacked vectors and matrices from above, we can write a matrix-update representation of the algorithm:
\begin{subequations}\label{eq:algorithmvector}
	\begin{align}
		\x(k+1) &= \R \left(\x(k) - \alpha \Y(k) \right) \label{alg:vector_x}, \\
		\y^h(k+1) &= \C^h \y^h(k) +  \G^h(k+1)  - \G^h(k)  \label{alg:vector_y}.
			\end{align}
\end{subequations}
This algorithm is based on the push-pull algorithm, discussed for example in \cite{Pu2020} or \cite{Xin2019}, which was adapted to the cluster game scenario. 
However, in contrast to the algorithms in these publications, in the gradient tracking step of our algorithm, all agents solely exchange information with other agents of the same cluster $h$. This is ensured by setting the graphs $\mathcal{G}^h$ and therewith the matrix $\C^h$ appropriately. Furthermore, in the update step of the estimation, only those estimations of decisions are updated by the gradient information that belong to the respective cluster $h$ that $i$ is part of. Estimations $\x_i^{-h}$ of agent $i$ are updated without using gradient information. By this structure, we can assure that own decisions are pushed towards the local social welfare optimum in the respective clusters, while estimations of the decisions of other clusters are pushed towards a consensus.\\
In contrast to the algorithm presented in \cite{Meng2020}, our algorithm relies on directed communication, which extends the range of applications. Furthermore, in \cite{Meng2020} only leaders can exchange information between clusters. By allowing more inter-cluster communication channels in our approach, the amount of exchanged information  can be increased such that an inter-cluster consensus is likely to be achieved faster. For a preliminary convergence comparison, see the end of the Simulation section. 

\subsection{Convergence}
Let the vectors $\barx(k) \in \mathbb{R}^{1 \times q}$ and $\bary^h(k) \in \mathbb{R}^{1 \times q_h}$ be defined as
\begin{align*}
	\barx(k) =  \ub^T \x(k), \qquad \bary^h(k) = \frac{1}{n_h} \1^T \y^h(k),
\end{align*} 
with $\ub$ being the left eigenvector of matrix $\R$, see Lemma \ref{lemma:RCeigen}.
Due to initialization of $\y^h$ in Equation \eqref{eq:yhinit}, it can be shown by induction that 
\begin{equation}\label{eq:yinduction}
	\bary^h(k) = \frac{1}{n_h}  \sum_{i = 1}^{n_h} \nabla_{\x_i^h} f_i^h(\x_i^h(k), \x_i^{-h}(k)).
\end{equation}
By using the eigenvectors $\vb^h$ of matrices $\C^h$, we define the block matrices
\begin{align}
	\Lambdab_{\y} &= \diag\lbrace\vb^1\bary^1(k), ..., \vb^H\bary^H(k)\rbrace \in \mathbb{R}^{n \times q}, \\
	\Lambdab_{\barg} &=  \diag\lbrace\vb^1\barg^1(k), ..., \vb^H\barg^H(k)\rbrace \in \mathbb{R}^{n \times q},
\end{align}
using the vectors \begin{equation*}
		\barg^h (k) = \frac{1}{n_h}  \sum_{i = 1}^{n_h} (\nabla_{\barx^h}f_i^h(\barx^h(k), \barx^{-h}(k)))^T \in \mathbb{R}^{1 \times q^h}.
\end{equation*}

The general structure of our convergence proof is a known procedure (see \cite{Meng2020}, \cite{Pu2020}), which we extend to our problem formulation. The main idea is to show that the norms $\nl \1 \barx(k)- \1 \xstar  \nr_F, \nl \x(k) - \1 \barx(k) \nru$ and $\sum_{h=1}^H\nl \y^h(k) - \vb^h \bary^h(k) \nrv $converge to zero as time goes to infinity when using the update equation of $\eqref{eq:algorithmagent}$ or \eqref{eq:algorithmvector}. This means that all estimations $\x_i$ and all tracking variables $\y_i^h$ converge to a stable state. This can be achieved by upper bounding the update steps by a linear, time-invariant matrix $\A$ and showing that the spectral radius of this matrix is strictly smaller than 1. Therefore, we show that:
%
%
\begin{prop}\label{prop:matrixA}
	Let Assumptions \ref{as:graphC}, \ref{as:graphR} and \ref{as:lipschitz}  hold. The vector $\xstar$ fulfills the optimality condition \eqref{eq:socialwelfarecondition}. Using update equations of the Algorithm in \eqref{eq:algorithmagent} or \eqref{eq:algorithmvector}, the following linear inequality system can be established:
	\begin{align}
	\begin{bmatrix}
		\nl \1 \barx(k+1)- \1 \xstar  \nr_F \\
		\nl \x(k+1) - \1 \barx(k+1) \nru \\
		\sum_{h=1}^H\nl \y^h(k+1) - \vb^h \bary^h(k+1) \nrv
	\end{bmatrix} \\ \leq \A  \begin{bmatrix}
	\nl \1 \barx(k)- \1 \xstar  \nr_F \\
	\nl \x(k) - \1 \barx(k) \nru \\
	\sum_{h=1}^H\nl \y^h(k) - \vb^h \bary^h(k) \nrv
	\end{bmatrix},
	\end{align}
	with matrix
	\begin{align*}
		\bm A = \begin{bmatrix}
		\phi(\alpha) & \alpha a_{12} & \alpha a_{13} \\
		\alpha a_{21}&\sigma_R + \alpha a_{22} & \alpha a_{23} \\
		\alpha a_{31 }& a_{32} + \alpha a'_{32} & \sigma_c + \alpha a_{33}
	\end{bmatrix}
		\end{align*}

and scalar function
\begin{equation*}
	\phi(\alpha) = 	\sqrt{1  - 2\alpha \underline{\eta} \mu +  \alpha^2 L_v^2 \nl \1 \ub^T \nr_2^2}
\end{equation*}
where $L_v = L \max_{i,h} \lbrace v_i^h\rbrace$ and  $0 < \underline{\eta} \leq \min_h \lbrace \eta^h = \frac{(\vb^h)^T \ub^h}{n_h} \rbrace $.\\
All factors $a_{12}, a_{13}, a_{21}, a_{22}, a_{23}, a_{31}, a_{32},a'_{32}, a_{33}$ are positive. 
	
\end{prop}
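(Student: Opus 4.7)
The plan is to derive each of the three rows of the inequality separately, producing one row of $\bm A$ at a time. The structural identities driving every step are $\ub^T \R = \ub^T$, $\R \1 = \1$, $\1^T \C^h = \1^T$, and $\C^h \vb^h = \vb^h$, which, combined with the initialization \eqref{eq:yhinit} and the induced identity \eqref{eq:yinduction}, yield closed-form updates for $\barx(k+1)$ and $\bary^h(k+1)$ and align the natural contraction factors with those captured by Lemma \ref{lemma:sigma}. The auxiliary matrices $\Lambdab_{\y}$ and $\Lambdab_{\barg}$ act as bridges: $\Lambdab_{\y}$ holds per-cluster averages of the \emph{tracked} local gradients, while $\Lambdab_{\barg}$ holds the corresponding averages evaluated at the running consensus $\barx(k)$; the distance between them is Lipschitz-controlled by the consensus error, and $\ub^T \Lambdab_{\barg}$ is, up to per-cluster scalars $\eta^h = (\vb^h)^T \ub^h / n_h$, the full pseudo-gradient $\M(\barx(k))$.

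For the first row I would start from $\barx(k+1) = \barx(k) - \alpha \ub^T \Y(k)$, which follows by left-multiplying \eqref{alg:vector_x} by $\ub^T$, and split $\ub^T \Y(k) = \ub^T(\Y(k) - \Lambdab_{\y}) + \ub^T(\Lambdab_{\y} - \Lambdab_{\barg}) + \ub^T \Lambdab_{\barg}$. The last summand produces the gradient-descent step on $\M$ up to the per-cluster factors $\eta^h$; after taking norms and applying the standard expansion $\|x - \alpha \nabla\|^2 \le (1 - 2\alpha\mu + \alpha^2 L^2)\|x - x^*\|^2$ in the weighted setting (using strong monotonicity from Assumption \ref{as:strongmonotonegame} together with Assumption \ref{as:lipschitz}), this yields the $\phi(\alpha)$ contraction on $\nl \1 \barx(k) - \1 \xstar \nr_F$. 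The remaining two summands are Lipschitz-bounded in terms of the consensus and gradient-tracking errors, producing the $\alpha a_{12}$ and $\alpha a_{13}$ coefficients after the norm-equivalence constants from Section \ref{subsec:matrixnorms} are absorbed.

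For the second row I would write $\x(k+1) - \1 \barx(k+1) = \R\x(k) - \1\barx(k) - \alpha (\R - \1\ub^T)\Y(k)$, apply Lemma \ref{lemma:sigma} to the first two terms to extract the factor $\sigma_R$, and then decompose $(\R - \1\ub^T)\Y(k)$ with the same three-term split, so that each piece converts to exactly one of $\alpha a_{21}, \alpha a_{22}, \alpha a_{23}$ via Lipschitz continuity and norm equivalence. For the third row I would start from $\y^h(k+1) - \vb^h \bary^h(k+1) = [\C^h \y^h(k) - \vb^h \bary^h(k)] + (\bm I - \tfrac{1}{n_h}\vb^h \1^T)(\G^h(k+1) - \G^h(k))$, apply Lemma \ref{lemma:sigma} to the bracketed part to extract $\sigma_C$, and bound $\nl \G^h(k+1) - \G^h(k) \nr_F \le L \nl \x(k+1) - \x(k) \nr_F$ through Assumption \ref{as:lipschitz}. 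Substituting $\x(k+1) - \x(k) = (\R - \bm I)(\x(k) - \1\barx(k)) - \alpha \R \Y(k)$ contributes the consensus-error coefficient $a_{32}$ that carries \emph{no} $\alpha$ factor, and a further three-term decomposition of $\R \Y(k)$ produces $\alpha a_{31}, \alpha a'_{32}, \alpha a_{33}$; summation over $h$ closes the row.

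The main obstacle is the $\phi(\alpha)$ entry. Since the effective per-cluster step-size is $\alpha \eta^h$ and these scalars differ across clusters, strong monotonicity cannot be invoked uniformly as in the single-cluster case; one has to lower-bound $\eta^h \ge \underline{\eta}$ after handling the weighted inner product, which is why $\underline{\eta}$ appears under the square root. The cross-term $\alpha^2 L_v^2 \nl \1\ub^T \nr_2^2$ similarly arises from a weighted Lipschitz estimate of $\nl \1 \ub^T \Lambdab_{\barg} \nr_F$, where each cluster block carries the eigenvector factor $\vb^h$ that is absorbed into $L_v = L \max_{i,h} v_i^h$. The $a_{32}$ entry without an $\alpha$ prefactor is not an obstacle for the subsequent spectral-radius analysis because $\sigma_C < 1$ strictly, so by continuity of the eigenvalues of $\bm A$ in $\alpha$ the spectral radius can still be pushed below one for all sufficiently small $\alpha > 0$.
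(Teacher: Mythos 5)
Your proposal follows essentially the same route as the paper's proof: the same three-term decomposition $\Y(k) \to \Lambdab_{\y} \to \Lambdab_{\barg}$ for the first row (yielding $\phi(\alpha)$ from strong monotonicity plus the weighted Lipschitz bound with $\underline{\eta}$ and $L_v$), the same $\sigma_R$/$\sigma_C$ contractions via Lemma \ref{lemma:sigma} for rows two and three, and the same substitution $\x(k+1)-\x(k) = (\R-\bm I)(\x(k)-\1\barx(k)) - \alpha\R\Y(k)$ that produces the $\alpha$-free coefficient $a_{32}$. The only cosmetic difference is where the three-way split of $\Y(k)$ is applied in rows two and three (the paper first factors out $\nl\R - \1\ub^T\nr_2$ and then bounds $\nl\Y(k)\nr_F$ by the three error terms), which leads to the same coefficient structure.
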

The proof of this Proposition relies on the argumentation in \cite{Meng2020}, which we were able to adjust to our setting, i.e. incorporation of row- and column-stochastic matrices $\R$ and $\C^h$ that allow for more general communication than undirected leader-follower architectures. For the mathematical details and factors $a_{ij}$ see Appendix \ref{ap:proposition}. 
\newline
Now that we have the upper bound of every iteration, we need to show that system matrix $\A$ is stable. Therefore, we demonstrate that
\begin{prop}\label{prop:Astable}
	There exists a step-size $\alpha > 0$ such that the spectral radius of $\bm A(\alpha)$ is strictly smaller than $1$, i.e. 
	\begin{equation}
		\rho(\bm A) < 1.
	\end{equation}
\end{prop}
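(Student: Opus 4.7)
The plan is to invoke the standard Perron-style sufficient condition that a nonnegative matrix $\bm A$ admitting a strictly positive vector $\bm\delta = (\delta_1, \delta_2, \delta_3)^T$ with $\bm A\bm\delta < \bm\delta$ entrywise satisfies $\rho(\bm A) < 1$ (a consequence of the Collatz--Wielandt formula). I would first fix $\alpha_0 > 0$ small enough that every entry of $\bm A(\alpha)$ is nonnegative for $\alpha \in (0,\alpha_0]$; the only nontrivial requirement is $\phi(\alpha) \geq 0$, which follows because the radicand equals $1$ at $\alpha = 0$ and is continuous in $\alpha$. The rest of the argument constructs a suitable $\bm\delta$ and shrinks $\alpha$ further.

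Writing $\bm A(\alpha)\bm\delta < \bm\delta$ row by row yields the three inequalities
\begin{align*}
(1-\phi(\alpha))\delta_1 &> \alpha(a_{12}\delta_2 + a_{13}\delta_3), \\
(1-\sigma_R)\delta_2 &> \alpha(a_{21}\delta_1 + a_{22}\delta_2 + a_{23}\delta_3), \\
(1-\sigma_c)\delta_3 - a_{32}\delta_2 &> \alpha(a_{31}\delta_1 + a_{32}'\delta_2 + a_{33}\delta_3).
\end{align*}
Since Lemma \ref{lemma:sigma} guarantees $\sigma_R, \sigma_c < 1$, I would set $\delta_2 = 1$ and pick $\delta_3 > a_{32}/(1-\sigma_c)$, so that the left-hand sides of rows two and three are strictly positive constants independent of $\alpha$. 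With $\delta_1$ to be chosen below, rows two and three then hold whenever $\alpha$ is below explicit thresholds that are linear in $1/\delta_1$.

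Row one is the borderline case, because $\phi(0) = 1$ makes its left-hand side vanish at $\alpha = 0$. Applying $\sqrt{1+y} \leq 1 + y/2$ to the radicand defining $\phi$ gives
\begin{equation*}
\phi(\alpha) \leq 1 - \alpha \underline{\eta}\mu + \tfrac{1}{2}\alpha^2 L_v^2 \nl \1\ub^T\nr_2^2,
\end{equation*}
so $1 - \phi(\alpha) \geq \tfrac{1}{2}\alpha\underline{\eta}\mu$ whenever $\alpha \leq \underline{\eta}\mu/(L_v^2\nl \1\ub^T\nr_2^2)$. Under this restriction, row one reduces to the $\alpha$-independent condition $\underline{\eta}\mu\,\delta_1/2 > a_{12}\delta_2 + a_{13}\delta_3$, which is satisfied by taking $\delta_1$ sufficiently large. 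Choosing $\alpha$ smaller than $\alpha_0$, smaller than $\underline{\eta}\mu/(L_v^2\nl \1\ub^T\nr_2^2)$, and smaller than the thresholds from rows two and three then yields $\bm A(\alpha)\bm\delta < \bm\delta$ and hence $\rho(\bm A(\alpha)) < 1$.

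The main obstacle is precisely row one: the proof must extract a negative linear-in-$\alpha$ contribution from $\phi(\alpha)$, arising from the strong-monotonicity constant $\mu$ of Assumption \ref{as:strongmonotonegame}, that dominates the positive off-diagonal coupling $\alpha(a_{12}\delta_2 + a_{13}\delta_3)$; all other rows inherit strict decrease directly from the spectral gaps $1-\sigma_R$ and $1-\sigma_c$ supplied by Lemma \ref{lemma:sigma}, and the remaining bookkeeping amounts to a continuity argument in $\alpha$.
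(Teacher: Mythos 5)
Your proposal is correct, but it follows a genuinely different route from the paper. The paper's proof is a perturbation argument: at $\alpha=0$ the matrix $\bm A(0)$ is lower triangular with eigenvalues $1,\sigma_R,\sigma_C$, the Perron eigenvalue $\lambda_1=1$ has right eigenvector $[1\ 0\ 0]^T$, and differentiating the eigenvalue problem gives $\frac{d\lambda_1}{d\alpha}\big|_{\alpha=0}=\phi'(0)=-\underline{\eta}\mu<0$, so by continuity of the spectrum $\rho(\bm A(\alpha))<1$ for some small $\alpha>0$. You instead invoke the sub-invariant-vector criterion ($\bm A\bm\delta<\bm\delta$ entrywise for some $\bm\delta>0$ implies $\rho(\bm A)<1$) and construct $\bm\delta$ explicitly: $\delta_2=1$, $\delta_3>a_{32}/(1-\sigma_C)$, and $\delta_1$ large enough that $\tfrac12\underline{\eta}\mu\,\delta_1>a_{12}\delta_2+a_{13}\delta_3$, where the crucial estimate $1-\phi(\alpha)\geq\tfrac12\alpha\underline{\eta}\mu$ for $\alpha\leq\underline{\eta}\mu/(L_v^2\nl\1\ub^T\nr_2^2)$ comes from the concavity bound $\sqrt{1+y}\leq 1+y/2$. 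Both arguments isolate the same mechanism (the strong-monotonicity constant $\mu$ supplies the only source of strict contraction in the first coordinate, while the spectral gaps $1-\sigma_R$ and $1-\sigma_C$ handle the other two), but your version buys an explicit, quantitative admissible range for $\alpha$ and avoids the eigenvalue-perturbation formula, whose careful justification (simple eigenvalue, left/right eigenvectors, plus the separate continuity argument that the other two eigenvalues stay below $1$) the paper largely glosses over; the paper's version is shorter and matches the standard presentation in the cited references. One cosmetic remark: $\phi(\alpha)\geq 0$ is automatic since it is defined as a square root; the condition your $\alpha_0$ actually needs to guarantee is that the radicand stays nonnegative so that $\phi$ is real, which your continuity argument does deliver.
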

\begin{proof}
	Again, we take our orientation from \cite{Meng2020}. For $\alpha = 0$, matrix $\bm A$ contains the entries
	\begin{equation}
		\bm A(\alpha = 0) = \begin{bmatrix}
			1 & 0 & 0 \\
			0&\sigma_R  & 0\\
			0& a_{32} & \sigma_C
		\end{bmatrix}.
	\end{equation}
	Because of $0 < \sigma_R< 1$ and $ 0<  \sigma_C < 1$, see Lemma \ref{lemma:sigma},  it holds that  $\rho(\bm A(0))  = 1$ and $\bm A(0)$ has the  right eigenvector $\ub = [1 \  0 \  0]^T$ corresponding to eigenvalue $\lambda_1 = 1$. Now we need to investigate how this eigenvalue $\lambda_1$ changes, when the value of $\alpha$ increases from $0$. For this, the eigenvalue problems provides us with
	$	\frac{d \lambda_1(\alpha)}{d \alpha}\big|_{\alpha = 0} \ub =\frac{d  \bm A(\alpha)} {d \alpha}\big|_{\alpha = 0} \ub,$
	from which 
	\begin{align*}
		\frac{d \lambda_1(\alpha)}{d \alpha}&\Bigg|_{\alpha = 0} =  \frac{d \phi(\alpha)}{d \alpha } \Bigg|_{\alpha = 0} \\
		& = \frac{- 2 \underline{\eta} \mu +  2\alpha L_v^2 \nl \1 \ub^T \nr_2^2}{2 \sqrt{1  - 2\alpha \underline{\eta} \mu +  \alpha^2 L_v^2 \nl \1 \ub^T \nr_2^2}}\Bigg|_{\alpha = 0} 
		= -  \underline{\eta} \mu
	\end{align*}
	follows, where $0 < \underline{\eta} \leq \eta^h = \frac{(\vb^h)^T \ub^h}{n_h}   \ \forall h$, because all $\vb^h, \ h = 1, ..., H$, and $\ub$ are positive, see Lemma \ref{lemma:RCeigen}. Therefore, $-  \underline{\eta} \mu < 0$. Because of this, the value of the spectral radius $\rho(\bm A(\alpha))$ decreases as $\alpha$ increases from zero. Following from this fact, together with the continuity of the evolution of spectral radii, there must exist an $\alpha > 0$ for which $\rho(\bm A(\alpha))< 1$. 
\end{proof}
With this result, we now know that the variables of the algorithm \eqref{eq:algorithmagent} or \eqref{eq:algorithmvector} converge to stable states. In the following theorem we combine all of the results above and show the problem of the multi-cluster game can be solved with our algorithm:
\begin{theorem}
	Let Assumption \ref{as:graphC}, \ref{as:graphR}, \ref{as:lipschitz} and \ref{as:strongmonotonegame}  be given. Then, there exists a unique Nash Equilibrium for the game $\Gamma(H, \mathbb{R}^q, \lbrace F^h\rbrace)$ defined in \eqref{eq:gameproblem}. Furthermore, using the update equations in \eqref{eq:algorithmagent} or \eqref{eq:algorithmvector}, it holds that the estimations of all agents reach a consensus
	\begin{equation}\label{eq:toinfinityandbeyond}
		\lim_{k \rightarrow \infty}\x_j(k) =  \lim_{k \rightarrow \infty}\x_i(k) = \xstar , \forall i,j = 1, ..., n,
	\end{equation}
	where the stable state $\xstar$ fulfills optimality conditions \eqref{eq:socialwelfarecondition} and \eqref{eq:nasheqcondition}. With this, optimality condition \eqref{eq:consensuscondition} is fulfilled as well. Therefore, the multi-cluster game is solved and the convergence rate to the optimum is linear. 
\end{theorem}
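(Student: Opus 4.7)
The plan is to assemble the theorem from two pieces that are already largely encapsulated in Propositions \ref{prop:matrixA} and \ref{prop:Astable}: first identify the fixed point $\xstar$ appearing in those propositions with the unique Nash Equilibrium of $\Gamma(H,\mathbb{R}^q,\{F^h\})$, and then iterate the linear inequality of Proposition \ref{prop:matrixA} to obtain geometric decay of the three residual norms.

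First I would treat existence and uniqueness of the Nash Equilibrium. Since the game is unconstrained and each $F^h$ is convex in $\x^h$, the first-order Nash condition \eqref{eq:nasheqcondition} reduces to $\sum_{i=1}^{n_h}\nabla_{\x^h}f_i^h(\xstar)=\bm 0$ for every $h$, i.e.\ to $\M(\xstar)=\bm 0$, which is exactly \eqref{eq:socialwelfarecondition}. The strong monotonicity inequality recalled in the Remark after Assumption \ref{as:strongmonotonegame} forces any two zeros of $\M$ to coincide, so such a point is unique. This is precisely the shift point entering Proposition \ref{prop:matrixA}, so the propositions and the theorem are consistent.

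Next I would iterate the inequality of Proposition \ref{prop:matrixA}: choosing $\alpha$ as in Proposition \ref{prop:Astable} so that $\rho(\A(\alpha))<1$, a straightforward induction on $k$ yields
\begin{equation*}
\begin{bmatrix}
\nl \1\barx(k)-\1\xstar\nr_F \\
\nl \x(k)-\1\barx(k)\nru \\
\sum_{h=1}^H\nl \y^h(k)-\vb^h\bary^h(k)\nrv
\end{bmatrix}\le \A^k\begin{bmatrix}
\nl \1\barx(0)-\1\xstar\nr_F \\
\nl \x(0)-\1\barx(0)\nru \\
\sum_{h=1}^H\nl \y^h(0)-\vb^h\bary^h(0)\nrv
\end{bmatrix}.
\end{equation*}
Since $\rho(\A)<1$, Gelfand's formula gives $\tilde\rho\in(\rho(\A),1)$ and a constant $C>0$ with $\nl\A^k\nr_2\le C\tilde\rho^k$, so each residual norm on the left contracts geometrically. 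Passing through the equivalence constants $\delta_{F,u},\delta_{F,v}$ from Subsection \ref{subsec:matrixnorms} together with the triangle inequality $\nl\x(k)-\1\xstar\nr_F\le \nl\x(k)-\1\barx(k)\nr_F+\nl\1\barx(k)-\1\xstar\nr_F$, this converts to entry-wise convergence $\x_i(k)\to\xstar$ at a linear rate, establishing \eqref{eq:toinfinityandbeyond} and \eqref{eq:consensuscondition}. Optimality of the limit with respect to \eqref{eq:socialwelfarecondition} and \eqref{eq:nasheqcondition} is then immediate from the first step.

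The main obstacle, though still essentially routine, is the careful bookkeeping needed to turn contraction in the weighted norms $\nl\cdot\nru$ and $\nl\cdot\nrv$ into ordinary Euclidean convergence of $\x_i(k)-\xstar$ at the same geometric rate: the equivalence constants inflate the front factor but leave the contraction ratio $\tilde\rho$ intact, which is what preserves linearity. All substantive analytical work already sits inside Propositions \ref{prop:matrixA} and \ref{prop:Astable}; the theorem itself is their direct assembly once the Nash Equilibrium has been identified with $\xstar$.
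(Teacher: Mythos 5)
Your proposal is correct and follows essentially the same route as the paper: identify the Nash Equilibrium with the unique zero of the game mapping $\M$ via convexity and strong monotonicity, then invoke Propositions \ref{prop:matrixA} and \ref{prop:Astable} for linear convergence of the three residual norms and translate back to consensus of the $\x_i(k)$. The only difference is that you spell out the iteration $\A^k$, Gelfand's formula, and the norm-equivalence bookkeeping explicitly, where the paper simply asserts linear convergence as a consequence of the two propositions (and cites a reference for existence/uniqueness under strong monotonicity rather than arguing it).
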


\begin{proof}
	Given strong monotony of the mapping $\M(\x)$ as claimed in Assumption \ref{as:strongmonotonegame}, there exists a unique Nash Equilibrium and the vector $\xstar$ is this unique stable point, if it satisfies $\M(\xstar) = \bm 0$ \cite{Tatarenko2019}.
	From Propositions \ref{prop:matrixA} and \ref{prop:Astable}, we know that all estimations converge linearly to a consensus and this consensus is $\xstar$, which satisfies optimality condition \eqref{eq:socialwelfarecondition}. Therefore, the expression in Equation \eqref{eq:toinfinityandbeyond} is true and the convergence rate is linear.\\ 
	 According to the definition of the game mapping in Equation \eqref{eq:gamemapping}, $\M(\xstar) = \bm 0$ holds if $\bm g^h(\xstar) \triangleq \sum_{i = 1}^{n_h} \nabla_{\x^h}f_i^h(\xstar) = \bm 0$ for all $h = 1, ..., H$. This in turn is the condition for the local social welfare optimum in Equation \eqref{eq:socialwelfarecondition}. This means that if the estimations of all agents are in consensus with each other, i.e. condition \eqref{eq:consensuscondition} holds, and this consensus vector $\xstar$ fulfills condition \eqref{eq:socialwelfarecondition} for all $h = 1, ..., H$, then the Nash Equilibrium condition \eqref{eq:nasheqcondition} is always fulfilled as well. We showed convergence towards $\x^*$, which fulfills condition \eqref{eq:socialwelfarecondition}, in Propositions \ref{prop:matrixA}  and \ref{prop:Astable}. Therefore, the proof is concluded.
\end{proof}
\section{Simulation}
For the verification of our algorithm by simulation, we chose a variant of the well-known cournot game, like it is done in \cite{Meng2020}. In the scenario there are $n$ factories that produce the same product. We describe the amount of product units produced by factory $i$ with the scalar $x_i$. Each factory has an individual cost $C_i(x_i)$ for producing $x_i$ units of the product. For our simulation, we chose $	C_i(x_i) = a_i x_i^2 + b_i x_i + c_i$. By selling $x_i$ units for a price $P(\x)$, which depends on the vectorized output $\x = [x_1, ..., x_n]^T$ of all factories, each factory generates revenue. We chose the price function $P(\x) = P_c - \sum_{j=1}^n x_j$, where $P_c$ is some positive constant. With this, factory $i$ has the objective function $f_i(\x) = C_i(x_i) - x_iP(\x).$
Each of the factories belongs to a company $h$ and this company aims to minimize the objective function for all of the factories that belongs to it
\begin{equation}\label{eq:costcournot}
	\min_{\x^h} F^h = \min_{\x^h} \sum_{i = 1}^{n_h} C_i(x_i) + x_iP(\x)
\end{equation} 
by adjusting the output $\x^h = [(x_i)_{i=1}^{n_h}]$, while competing against other companies. Thereby, we have arrived at the formulation of a non-cooperative, multi-cluster game, where the factories correspond to agents and the companies are the agent-containing clusters.\\
It can readily be confirmed that cost functions in Equation \eqref{eq:costcournot} fulfill Assumptions \ref{as:lipschitz} and \ref{as:strongmonotonegame}. 
We simulated our algorithm with $H = 3 $ companies. Company $h=1$ owns four factories, while companies $h=2,3$ each own three. The factories are connected by inner-cluster graphs $\mathcal{G}^h$ and a global graph $\mathcal{G}$, which were chosen such that they satisfy Assumptions \ref{as:graphC} and \ref{as:graphR}, respectively.  The starting estimations of the decisions of each agent were chosen randomly and the tracking variables were set according to Equation \eqref{eq:yhinit}. The step-size was set to $\alpha = 0.1$ . The results of the simulation are shown in Figure \ref{fig:cournot_states}. 
\begin{figure}[h]
	\centering
	\scalebox{0.9}{
		\input{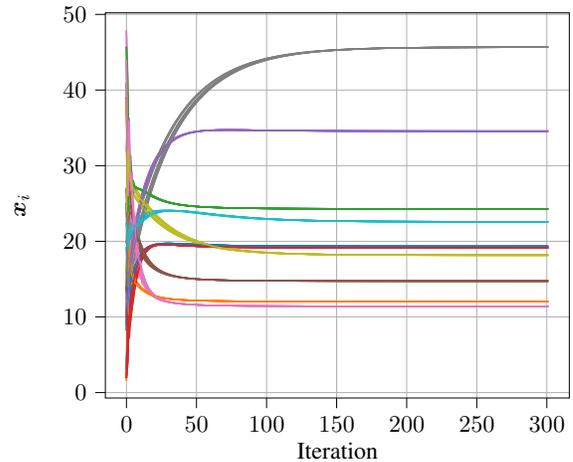} }
	\caption{Convergence of estimations $\x_i$ to the optimum $\xstar$ for the multi-cluster cournot game with 10 factories. Each color represents a different dimension of the decision vector $\x$ and each line is the estimation of one dimension made by one agent.}
	\label{fig:cournot_states}
\end{figure}
It can be seen that a consensus is reached in every dimension of the final decision vector. In fact, it can be confirmed that after 300 iterations the sum of absolute differences between the estimation of agent $1$, chosen as a representative, and all other agents is less than $1.90\times 10^{-3}$ for all dimensions of $\x$. Furthermore, the normed difference of the estimation $\x_1$ of agent $1$ to the Nash Equilibrium state $\xstar$ can be calculated with $ \epsilon = ||\x_1 - \xstar||_2 =  0.005$. When restricting the setup to a leader-follower hierarchy, such as in \cite{Meng2020}, where only the cluster leader communicates with other clusters, we need about 500 iterations to receive a comparable accuracy with the same parameterization.

\section{Conclusion}
Within this work, we presented a distributed algorithm for the solution of a class of multi-cluster games and we proved linear convergence to an optimal decision vector that fulfills the optimality conditions of the Nash Equilibrium between the clusters and  social welfare optimum inside each cluster. As less restrictions are imposed on the communication architecture, the algorithm is applicable to a wider range of problems than the one in \cite{Meng2020}. However, in order to handle more complicated scenarios such as the multi-level energy provision problem in Smart Grids, mentioned in the introduction, the optimization procedure needs to be able to respect constraints of agents or clusters. How to include such constraints into our algorithm seems to be a promising field of future research. 

\bibliography{clusterbib}
\bibliographystyle{plain}

\appendix
\subsection{Supporting Lemma for proof of Proposition \ref{prop:matrixA}}
\begin{lemma}\label{lemma:inequalities}
	Let $L_v = L \max_{i,h} \lbrace v_i^h\rbrace$. $\xstar \in \mathbb{R}^{1 \times q}$ is a solution of the problem that fulfills optimality condition \eqref{eq:socialwelfarecondition}. Then, the following inequalities hold:
	\begin{align*}
	1) \quad&\nl \1 \barx(k) - \1 \xstar -  \1 \ub^T \alpha \Lambdab_{\barg} \nr_F  \\ 
	&\leq  \sqrt{1  - 2\alpha \underline{\eta} \mu +  \alpha^2 L_v^2 \nl \1 \ub^T \nr_2^2}\nl \1 \barx(k) - \1 \xstar \nr_F \\
	2) \quad &\nl \1 \ub^T \alpha \left(\Lambdab_{\barg} - \Lambdab_{\y}\right) \nr_F  \\
	& \leq \alpha \delta_{F, u}L_v \nl \1 \ub^T \nr_2 \nl \x(k) - \barx(k) \nr_F^{\ub} \\
	3) \quad &\nl \1 \alpha \ub^T \left( \Lambdab_{\y} - \Y(k) \right) \nr_F \\
	&\leq \alpha \nl \1 \ub^T \nr_2 \delta_{F, v} \sum_{h = 1}^H \nl \vb^h \bary(k) - \y^h(k) \nr_F^{\vb^h} \\
	4) \quad	&\nl \Y(k) \nr_F \leq  \sum_{h=1}^H \nl\y^h(k) - \vb^h\bary^h(k)\nr_F \\
	&+   \sqrt{H} \delta_{F, u} L_v\nl\x(k) - \1\barx(k) \nru \\
	&+   \sqrt{H} L_v \nl\1 \barx(k) - \1\xstar \nr_F \\
	\end{align*}
\end{lemma}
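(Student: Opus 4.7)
The plan is to prove the four inequalities in sequence, each using a different mix of block-decomposition, sub-multiplicativity (Lemma \ref{lemma:submult}), norm equivalence, and the Lipschitz/monotonicity assumptions.

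For (1), I would first exploit the structure of $\1 \ub^T \Lambdab_{\barg}$: every row of this $n \times q$ matrix is identical to the row vector $\ub^T \Lambdab_{\barg}$, so $\nl \1\barx - \1\xstar - \1\ub^T\alpha\Lambdab_{\barg}\nr_F = \sqrt{n}\,\|\barx - \xstar - \alpha\,\ub^T\Lambdab_{\barg}\|_2$ and similarly $\nl \1\barx - \1\xstar\nr_F = \sqrt{n}\,\|\barx-\xstar\|_2$. Partitioning $\ub = [(\ub^1)^T,\dots,(\ub^H)^T]^T$ and using the block-diagonal structure of $\Lambdab_{\barg}$, the $h$-th block of $\ub^T\Lambdab_{\barg}$ equals $(\ub^h)^T \vb^h\,\barg^h = n_h\eta^h\,\barg^h = \eta^h\,g^h(\barx)^T$, where the last equality uses the definition of $\barg^h$. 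Expanding the Euclidean square $\|w - \alpha v\|_2^2$ with $w = \barx-\xstar$ and $v = \ub^T\Lambdab_{\barg}$ gives a cross term $-2\alpha\sum_h \eta^h (g^h(\barx) - g^h(\xstar))(\barx^h - (\xstar)^h)^T$ (using $\M(\xstar)=\bm 0$, which holds by the optimality condition~\eqref{eq:socialwelfarecondition}) and a residual $\alpha^2\|v\|_2^2$. Using Assumption~\ref{as:strongmonotonegame} together with $\eta^h \geq \underline{\eta}$ for all $h$ lower bounds the cross term by $-2\alpha\underline{\eta}\mu\|\barx-\xstar\|_2^2$, while Lipschitzness (Assumption~\ref{as:lipschitz}) and sub-multiplicativity bound $\|v\|_2 \leq L_v\,\nl\1\ub^T\nr_2\,\|\barx-\xstar\|_2$, so that the residual is $\leq \alpha^2 L_v^2 \nl\1\ub^T\nr_2^2 \|\barx-\xstar\|_2^2$. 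Taking square roots produces $\phi(\alpha)$ as claimed.

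For (2), the difference $\Lambdab_{\barg} - \Lambdab_{\y}$ is block-diagonal with blocks $\vb^h(\barg^h - \bary^h)$. From~\eqref{eq:yinduction}, $\bary^h - \barg^h = \tfrac{1}{n_h}\sum_{i=1}^{n_h}\bigl[\nabla_{\x_i^h}f_i^h(\x_i(k)) - \nabla_{\barx^h}f_i^h(\barx(k))\bigr]^T$, and Lipschitz continuity gives $\|\bary^h - \barg^h\|_2 \leq \tfrac{L}{n_h}\sum_i \|\x_i - \barx\|_2$. Applying Lemma~\ref{lemma:submult} with $\bm A = \1\ub^T$ and $\bm B = \Lambdab_{\barg}-\Lambdab_{\y}$, bounding $|\vb^h_i|$ by its maximum to obtain the constant $L_v$, and then converting the unweighted Frobenius bound on $\x(k)-\1\barx(k)$ to the $\ub$-weighted one via $\delta_{F,u}$ yields the stated inequality. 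For (3), I proceed similarly: applying Lemma~\ref{lemma:submult} to $\bm A = \1\alpha\ub^T$ and $\bm B = \Lambdab_{\y} - \Y(k) = \diag\{\vb^h\bary^h - \y^h\}$ gives $\nl\bm B\nr_F = \sqrt{\sum_h \nl\vb^h\bary^h - \y^h\nr_F^2} \leq \sum_h \nl\vb^h\bary^h - \y^h\nr_F$, and converting each summand to its $\vb^h$-weighted counterpart via $\delta_{F,v}$ completes the bound.

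For (4), decompose $\Y(k) = (\Y(k) - \Lambdab_{\y}) + \Lambdab_{\y}$ and apply the triangle inequality. The first piece is $\sqrt{\sum_h \nl\y^h - \vb^h\bary^h\nr_F^2} \leq \sum_h \nl\y^h - \vb^h\bary^h\nr_F$. For the second piece, block-diagonality gives $\nl\Lambdab_{\y}\nr_F^2 = \sum_h \|\vb^h\|_2^2\,\|\bary^h\|_2^2$. Writing $n_h\bary^h = \sum_i \nabla_{\x_i^h}f_i^h(\x_i) = \sum_i[\nabla_{\x_i^h}f_i^h(\x_i) - \nabla_{\barx^h}f_i^h(\barx)] + \sum_i[\nabla_{\barx^h}f_i^h(\barx) - \nabla_{\barx^h}f_i^h(\xstar)] + \sum_i \nabla_{\barx^h}f_i^h(\xstar)$ and using $\sum_i\nabla_{\barx^h}f_i^h(\xstar)=0$ by~\eqref{eq:socialwelfarecondition} together with Lipschitzness, one obtains a bound on $\|\bary^h\|_2$ in terms of $\sum_i \|\x_i - \barx\|_2$ and $\|\barx-\xstar\|_2$. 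Collecting the $\vb^h$-factors into $L_v$, switching to weighted norms via $\delta_{F,u}$, and applying $\sqrt{\sum_h a_h^2 + \sum_h b_h^2} \leq \sqrt{H}(\max a_h + \max b_h)$ combined with the triangle inequality produces the three-term decomposition with factor $\sqrt{H}$.

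The main obstacle is the cross-term bound in (1): strong monotonicity in Assumption~\ref{as:strongmonotonegame} controls only the unweighted sum $\sum_h (g^h(\barx)-g^h(\xstar))^T(\barx^h-(\xstar)^h)$, whereas the computation delivers an $\eta^h$-weighted sum. Establishing $\sum_h \eta^h(\cdot)_h \geq \underline{\eta}\mu\|\barx-\xstar\|_2^2$ requires either that each per-cluster term is non-negative (implicitly, a block-wise monotonicity property) or that the weighted mapping $\tilde{\M}(\x) = [\eta^1 g^1(\x)^T,\dots,\eta^H g^H(\x)^T]^T$ inherits strong monotonicity with constant $\underline{\eta}\mu$; I would appeal to the corresponding structure assumed in the gradient-tracking literature (e.g.\ \cite{Meng2020}), which effectively provides this.
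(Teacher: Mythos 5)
Your proposal follows essentially the same route as the paper's proof in all four parts: for (1), expanding the squared Frobenius norm into a monotonicity-controlled cross term and a Lipschitz-controlled quadratic term, and for (2)--(4), combining Lemma \ref{lemma:submult}, the relation \eqref{eq:yinduction}, Assumption \ref{as:lipschitz}, the optimality condition \eqref{eq:socialwelfarecondition}, and the norm-equivalence constants, so the argument is sound and matches the paper. The obstacle you single out in (1) is genuine but is not resolved by the paper either: its proof silently performs the step $\sum_h \eta^h c_h \geq \underline{\eta}\sum_h c_h$ (with $c_h$ the per-cluster inner product), which requires either $c_h \geq 0$ for every $h$ or a weighted strengthening of Assumption \ref{as:strongmonotonegame}, so on this point your version is no weaker than the published one.
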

\begin{proof}
	Part 1): \\
	Expanding the square of the norm results in 
	\begin{align*}
	&\nl \1 \barx(k) - \1 \xstar -  \1 \ub^T \alpha\Lambdab_{\barg} \nr_F^2 \\
	&= \nl \1 \barx(k) - \1 \xstar \nr_F^2 \\
	&- 2 \left\langle \1 \barx(k) - \1 \xstar,  \1 \ub^T \alpha \diag\lbrace\vb^1 \barg^1(k), ..., \vb^H \barg^H(k)\rbrace\right\rangle_F\\
	& + \nl  \1 \ub^T \alpha \diag\lbrace\vb^1 \barg^1(k), ..., \vb^H \barg^H(k)\rbrace \nr_F^2.
	\end{align*}
	Here $\langle \cdot \rangle_F$ is the Frobenius inner product, see section \ref{subsec:matrixnorms} on matrix norms, which can be reformulated as follows using Assumption \ref{as:strongmonotonegame} and optimality condition \eqref{eq:socialwelfarecondition}:
	\begin{align*}
	&\left\langle \1 \barx(k) - \1 \xstar,  \alpha \1 \ub^T \diag\lbrace\vb^1 \barg^1(k), ..., \vb^H \barg^H(k)\rbrace\right\rangle_F \\
	&= \tr\left[n\alpha \diag\lbrace\vb^1 \barg^1(k), ..., \vb^H \barg^H(k)\rbrace^T \ub  \left( \barx(k) - \xstar\right) \right]\\
	&= n \alpha \sum_{h=1}^H \Big[  (\vb^h)^T \ub^h \barg^h(k) (\barx^h(k) - (\xstar)^h)^T \Big] \\
	&\geq  n \alpha \underline{\eta} \sum_{h=1}^H \Bigg[  \sum_{i = 1}^{n_h} (\nabla_{\barx^h}f_i^h(\barx^h(k), \barx^{-h}(k))- \nabla_{\x^h}f_i^h(\xstar))^T \\
	& \qquad(\barx^h(k) - (\xstar)^h)^T  \Bigg]\\
	& \geq n \alpha \underline{\eta} \mu ||\barx(k) - \xstar||_2^2 = \alpha \underline{\eta} \mu \nl\1 \barx(k) - \1 \xstar \nr_F^2,
	\end{align*}
	where it can be assumed that a  $\underline{\eta} >0$ exists such that $\eta^h = \frac{(\vb^h)^T \ub^h}{n_h} \geq \underline{\eta}, \ \forall h$. \\
	Furthermore, using Lemma \ref{lemma:submult}, Assumption \ref{as:lipschitz} and again the optimality condition \eqref{eq:socialwelfarecondition}, we receive
	\begin{align*}
	& \nl \alpha \1 \ub^T\diag\lbrace\vb^1 \barg^1(k), ..., \vb^H \barg^H(k)\rbrace \nr_F^2 \\
	& \leq  \alpha^2 \nl \1 \ub^T \nr_2^2 \Bigg(\sum_{h = 1}^H \sum_{i = 1}^{n_h} \frac{(v_i^h)^2}{n_h}  \Big|\Big| \sum_{i = 1}^{n_h} \Big(\nabla_{\barx^h}f_i^h(\barx(k)) \\
	&- \nabla_{\x^h}f_i^h(\xstar)\Big)^T \Big|\Big|_2^2 \Bigg) \\
	& \leq \alpha^2 L_v^2 \nl \1 \ub^T \nr_2^2 \nl \1 \barx(k) - \1 \xstar \nr_F^2
	\end{align*}
	with $L_v = L \max_{i,h}\lbrace  v_i^h\rbrace$. \\
	Summing up all results above and taking the square root concludes Part 1).\\
	Part 2): Applying  Lemma \ref{lemma:submult} and Assumption \ref{as:lipschitz}, we receive
	\begin{align*}
	&\nl \1 \ub^T \alpha \left(\Lambdab_{\barg} - \Lambdab_{\y}\right) \nr_F  \\
	& \leq \alpha \nl \1 \ub^T \nr_2 \Bigg(\sum_{h = 1}^H \sum_{i = 1}^{n_h} \frac{(\vb_i^h)^2}{n_h}  \Big|\Big| \sum_{i = 1}^{n_h}\Big(\nabla_{\barx^h}f_i^h(\barx(k)) \\
	&-  \nabla_{\x_i^h} f_i^h(\x_i(k)) \Big)\Big|\Big|_2^2 \Bigg)^{\frac{1}{2}} \\
	& \leq \alpha L_v \nl \1 \ub^T \nr_2 \Bigg(\sum_{h = 1}^H \sum_{i = 1}^{n_h}  ||\x_i(k) - \barx(k)||_2^2 \Bigg)^{\frac{1}{2}} \\
	& \leq \alpha \delta_{F, u}L_v \nl \1 \ub^T \nr_2 \nl \x(k) - \1 \barx(k) \nr_F^{\ub},
	\end{align*}
	where we used $L_v = L \max_{i,h}\lbrace  (v_i^h)^2\rbrace$ and the last inequality is true due to the equivalence relation of norms, see section \ref{subsec:matrixnorms}. \\
	Part 3): By applying  Lemma \ref{lemma:submult}, using equivalence of norms (see section \ref{subsec:matrixnorms}) and rewriting the Frobenius norm of the block matrix, we receive: 
	\begin{align*}
	&\nl \1 \alpha \ub^T \left( \Lambdab_{\y} - \Y(k) \right) \nr_F \\
	& \leq \alpha \nl \1 \ub^T \nr_2 \left( \sum_{h = 1}^H \nl \vb^h \bary^h(k) - \y^h(k) \nr_F^2 \right)^{\frac{1}{2}} \\
	& \leq \alpha \nl \1 \ub^T \nr_2 \delta_{F, v} \sum_{h = 1}^H \nl \vb^h \bary^h(k) - \y^h(k) \nr_F^{\vb^h}. 
	\end{align*}
	\newline
	Part 4): Due to the triangle inequality, it holds that
	\begin{align*}
	&\nl\Y(k)\nr_F = \sqrt{\sum_{h=1}^H \nl\y^h(k)\nr_F^2} \leq \sum_{h=1}^H \nl\y^h(k)\nr_F\\
	& \leq  \sum_{h=1}^H \left(\nl\y^h(k) - \vb^h\bary^h(k)\nr_F + \nl   \vb^h\bary^h(k)\nr_F \right).
	\end{align*}
	By using relation \eqref{eq:yinduction}, optimality condition \eqref{eq:socialwelfarecondition} and Assumption \ref{as:lipschitz}, we can provide the upper bound
	\begin{align*}
	& \nl   \vb^h\bary^h(k)\nr_F \\ 
	&\leq \Bigg(\sum_{i = 1}^{n_h} \frac{(v^h_i)^2}{n_h}   \sum_{i = 1}^{n_h} \Big|\Big| \nabla_{\x_i^h} f_i^h(\x_i^h(k), \x_i^{-h}(k))\\ 
	& \qquad - \nabla_{\x^h} f_i^h(\xstar)\Big|\Big|_2^2 \Bigg)^{\frac{1}{2}} \\
	&\leq \left( L_v^2 \sum_{i = 1}^{n_h} \frac{  1}{n_h} \left|\left| (\x_i - \xstar)_{i=1}^{n_h} \right|\right|_F^2 \right)^{\frac{1}{2}} \\
	&=  L_v \nl(\x_i - \barx(k) + \barx(k) -  \xstar)_{i=1}^{n_h}  \nr_F \\
	& \leq L_v\nl( \x_i(k) - \barx(k))_{i=1}^{n_h}  \nr_F \\
	&+   L_v \nl \1_{n_h}\barx(k) - \1_{n_h} \xstar \nr_F
	\end{align*}
	using $L_v =  L  \max_{i,h}\lbrace(v_i^h)^2\rbrace$.
	Plugging this into the result above and applying the norm equivalence relations, we receive
	\begin{align*}
	\nl\Y(k)\nr_F &\leq \sum_{h=1}^H \nl\y^h(k) - \vb^h\bary^h(k)\nr_F \\
	&+ \sum_{h=1}^H   L_v\nl( \x_i(k) - \barx(k))_{i=1}^{n_h} \nr_F \\
	&+  \sum_{h=1}^H  L_v \nl \1_{n_h}\barx(k) - \1_{n_h} \xstar \nr_F \\
	& \leq  \sum_{h=1}^H \nl\y^h(k) - \vb^h\bary^h(k)\nr_F \\
	&+   \sqrt{H} \delta_{F, u} L_v\nl\x(k) - \1\barx(k) \nru \\
	&+   \sqrt{H} L_v \nl\1 \barx(k) - \1\xstar \nr_F 
	\end{align*}
\end{proof}

\subsection{Proof of Proposition \ref{prop:matrixA}} \label{ap:proposition}
\begin{proof}
	For this proof, we consider each dimension of the vector separately and summarize the results at the end in matrix $\A$. \\
	\underline{Line 1:}
	We insert update equation of \eqref{alg:vector_x} into  $\barx(k+1) = \ub^T \x(k+1)$, add $ \1 \ub^T \alpha \Lambdab_{\y}  - \1 \ub^T \alpha  \Lambdab_{\y}$ and $\1 \ub^T \alpha \Lambdab_{\barg} - \1 \ub^T \alpha \Lambdab_{\barg}$ and apply the triangle inequality:
	\begin{align*}
	&\nl \1 \barx(k+1)- \1\xstar  \nr_F \\
	&\leq \nl \1 \barx(k)- \1 \xstar  -  \1 \ub^T \alpha \Lambdab_{\barg} \nr_F \\
	&+ \nl \1 \ub^T \alpha \left(\Lambdab_{\barg} - \Lambdab_{\y}\right) \nr_F + \nl \1  \ub^T\alpha \left( \Lambdab_{\y} - \Y(k) \right) \nr_F
	\end{align*}
	Applying the results in 1), 2) and 3) of Lemma \ref{lemma:inequalities}, respectively, we receive  the factors
	\begin{align*}
		\phi(\alpha) &= \sqrt{1  - 2\alpha \underline{\eta} \mu +  \alpha^2 L_v^2 \nl \1 \ub^T \nr_2^2}, \\
		a_{12} &= \delta_{F, u}L_v \nl \1 \ub^T \nr_2 ,	a_{13} = \alpha \nl \1 \ub^T \nr_2 \delta_{F, v}.
	\end{align*}
	\underline{Line 2:}
	Inserting the update equation of \eqref{alg:vector_x}, reordering and applying the triangle inequality leads to
	\begin{align*}
	&\nl \x(k+1) - \1 \barx(k+1) \nru \\
	& \leq \nl  \R\x(k) - \1 \barx(k)) \nru  + \alpha \delta_{u,F} \nl \left( \R  - \1 \ub^T \right) \Y(k) \nr_F \\
	\end{align*}
	Applying Lemma \ref{lemma:sigma} to the first element, Lemma $\ref{lemma:submult}$ to the second and
	inserting the result 4) of Lemma \ref{lemma:inequalities} for $\nl \Y(k) \nr_F$ yields the factors
	\begin{align*}
		a_{21} &= \delta_{u,F} \sigma_2 \sqrt{H}  L_v, a_{22} =    \delta_{u,F} \sigma_2  \sqrt{H} \delta_{F, u} L_v\\
		a_{23} &= \delta_{u,F} \delta_{F,v}\sigma_2,
	\end{align*}
	where  $\sigma_2 = \nl \R  - \1 \ub^T \nr_2 $.\\
	\underline{Line 3: }
	We expand the third line according to update equation \eqref{alg:vector_y} and then apply the triangle inequality
	\begin{align*}
	&\sum_{h=1}^H\nl \y^h(k+1) - \vb^h \bary^h(k+1)\nrv \\
	& \leq \sum_{h=1}^H  \nl  \C^h \y^h  -  \vb^h\bary^h(k) \nrv \\
	& +  \delta_{v, F}\sigma_I \sum_{h=1}^H  \nl \left(\G^h(k+1) - \G^h(k)\right) \nr_F
	\end{align*}
	using  $\sigma_I =\max_h \lbrace \nl I -  \vb^h \1^T \nr_F \rbrace$.
	The last expression can then be reformulated using Assumption \eqref{as:lipschitz}
	\begin{align*}
	& \sum_{h=1}^H  \nl \G^h(k+1) - \G^h(k) \nr_F \\
	& \leq \sum_{h=1}^H  \Bigg( \sum_{i = 1}^{n_h} L_i^h ||\x_i(k+1) - \x_i(k)||_2^2 \Bigg)^{\frac{1}{2}} \\
	& \leq L \sum_{h=1}^H  \nl (\x_i(k+1) - \x_i(k) )_{i=1}^{n_h}\nr_F \\
	& \leq L \sqrt{H} \nl \R \x(k) - \alpha\R\Y(k) - \x(k) \nr_F \\
	& \leq L \sqrt{H} \nl \R - \bm I\nr_2 \nl \x(k) - \1 \barx(k) \nr_F \\
	&+ \alpha L \sqrt{H} \nl \R \nr_2 \nl \Y(k) \nr_F. \\
	\end{align*}
	Inserting inequality 4) for $\nl \Y(k) \nr_F$ from Lemma \ref{lemma:inequalities} and plugging the result into above expression, we finally receive
	the factors
	\begin{align*}
	 &a_{31}  = \delta_{v, F}\sigma_I L H   L_v , a_{32}  =  \delta_{v, F}\sigma_I \delta_{F, u} L \sqrt{H}\nl \R - \bm I\nr_2, \\
	&a_{32'}  =  \delta_{v, F}\sigma_I  L H  \delta_{F, u} L_v, a_{33} = \delta_{v, F} \sigma_I L \sqrt{H} \delta_{F, v}.
	\end{align*}
	and $\nl \R \nr_2$ = 1, because of Lemma \ref{lemma:RCeigen}. 
\end{proof}

\end{document}